\documentclass[prx,twocolumn,letterpaper, superscriptaddress, eqsecnum, nofootinbib]{revtex4-2}

\usepackage[colorlinks=true,linkcolor=blue, citecolor=magenta,urlcolor=magenta,breaklinks=true, pdftex]{hyperref}
\usepackage{amsmath}
\usepackage{amsfonts}
\usepackage{graphicx}
\usepackage{multirow}
\usepackage[linesnumbered, ruled, lined, commentsnumbered]{algorithm2e} 
\usepackage{amsthm}
\newtheorem{prop}{Proposition}[section]

\newcommand{\ii}{i}

\begin{document}

\title{Protection of Exponential Operation using Stabilizer Codes in the Early Fault Tolerance Era}

\author{Dawei Zhong}
\affiliation{Department of Physics \& Astronomy, University of Southern California, Los Angeles, California 90089, United States}
\author{Todd A. Brun}
\affiliation{Department of Physics \& Astronomy, University of Southern California, Los Angeles, California 90089, United States}
\affiliation{Ming Hsieh Department of Electrical \& Computer Engineering, University of Southern California, Los Angeles, California 90089, United States}
\affiliation{Department of Computer Science, University of Southern California, Los Angeles, California 90089, United States}

\begin{abstract}
    Quantum error correction offers a promising path to suppress errors in quantum processors, but the resources required to protect logical operations from noise, especially non-Clifford operations, pose a substantial challenge to achieve practical quantum advantage in the early fault-tolerant quantum computing (EFTQC) era. In this work, we develop a systematic scheme to encode exponential maps of the form $\exp(-\ii\theta P)$ into stabilizer codes with simple circuit structures and low qubit overhead. We provide encoded circuits with small first-order logical error rate after postselection for the $[[n, n-2, 2]]$ quantum error-detecting codes and the $[[5, 1, 3]]$, $[[7, 1, 3]]$, and $[[15, 7, 3]]$ quantum error-correcting codes. Detailed analysis shows that under the level of physical noise of current devices, our encoding scheme is 4--7 times less noisy than the unencoded operation, while at most $3\%$ of runs need to be discarded.
\end{abstract}

\maketitle

\section{Introduction}\label{sec:intro}
Quantum computing promises significant speed-ups for solving classically intractable problems~\cite{nisq} such as integer factorization~\cite{factoring} and Hamiltonian simulation~\cite{qsim}. However, due to the presence of noise, it is challenging to demonstrate quantum advantage in solving these problems on current quantum computers. Quantum error-correction codes (QECCs) and fault-tolerant quantum computing (FTQC)~\cite{shor_code, steane_code, css_code, preskill_FTQC} are promising strategies to address the issue of noise. However, even though the capabilities of QECCs have been demonstrated in several recent experiments~\cite{willow,quera,tesseract, atomcomputing2024}, current and near-term quantum computers still cannot protect useful quantum algorithms from noise with FTQC due to the limitations of current hardware. 

To make the best use of hardware and realize practical quantum advantage, one strategy is to optimize quantum error correction codes to achieve near (or even full) fault tolerance with as few computational resources as possible. In particular, finding an approach to suppress noise in non-Clifford operations with low overhead is of the greatest importance for achieving quantum advantage. This is because currently available error-suppression methods for non-Clifford gates, such as magic-state distillation, require intensive resources that are far beyond the capabilities of current devices~\cite{rsa_2025}. 

For early fault-tolerant quantum computing, it is preferable to encode logical operations (especially non-Clifford operations) using relatively simple circuits and small stabilizer codes; these may not be completely fault-tolerant but can still reduce noise effectively. Much recent work on applying stabilizer codes to near-term quantum algorithm are based on the $[[n, n-2, 2]]$ quantum error detecting code (QEDC). For quantum chemistry, Ref.~\cite{422+H2} used the $[[4,2,2]]$ code (plus one ancilla) to protect a variational Unitary Coupled Cluster (UCC) ansatz~\cite{ucc_qc} and estimated the $\rm H_2$ molecule ground state energy. The logical error level of this ansatz was further investigated in Ref.~\cite{422+H2Meena}. Also, Ref.~\cite{QPE+iceberg} adopted Bayesian quantum phase estimation with QEDC to compute the $\rm H_2$ molecule ground state energy. For the quantum approximate optimization algorithm (QAOA), Ref.~\cite{QAOA+ErrorDetection} studied the performance of the $[[n, n-2, 2]]$ code family for the QAOA algorithm via numerical simulation, and the same team solved the low-autocorrelation binary sequences (LABS) problem with QAOA under the protection of an $[[n, n-2, 2]]$ code~\cite{LABS+iceberg}. For other quantum algorithms, Ref.~\cite{QITE+iceberg} successfully obtained both the ground and excited states of the spin singlet state with imaginary time evolution, where the $[[n, n-2, 2]]$ code effectively reduced errors in the circuit.

Another low-overhead quantum error correction architecture (the STAR architecture) for EFTQC applications, such as Trotter simulation and quantum phase estimation, uses the Clifford+$R_Z(\theta)$ universal gate set, where fault-tolerant Clifford operations are implemented by lattice surgery on different patches of surface code, and the logical rotations are realized by injecting ancilla states $|m_\theta\rangle = R_Z(\theta)|+\rangle$~\cite{STAR}. This architecture has an uncorrectable error rate of $O(p)$, or the better rate $O(|\theta|p)$ if the ancilla states are prepared with transversal multi-Pauli rotations~\cite{ImprovedSTAR}.

These attempts illustrate the continuing need for effective, low-overhead methods to reduce noise from non-Clifford gates for EFTQC applications. In this paper, we focus on encoded exponential operators for small quantum stabilizer codes. Exponential operators (or exponential maps, or multi-qubit rotations) are essential building blocks of many quantum algorithms that rely on the time evolution of a Hamiltonian system, or variational ans\"atze of this form. They take the form $\exp(-\ii\theta P)$, where $\ii$ is the imaginary unit, $\theta$ is an angle, and $P$ is an arbitrary $n$-qubit Pauli operator. An encoding of single- and two-qubit rotation gates for the $[[n, n-2, 2]]$ QEDC family was proposed in Ref.~\cite{iceberg}. Later, Ref.~\cite{gerhard2024weaklyFT} improved the logical single- and two-qubit rotations for QEDC codes by adding two controlled-NOT (CNOT) gates and one ancilla to remove all first-order errors except for analog errors resulting from rotation by an imprecise angle. We would like to extend the above ideas to more general exponential operators and other small stabilizer codes.

In Section~\ref{sec:encode}, we propose a method to encode exponential operators into arbitrary stabilizer codes. Note that this scheme---like those in the papers mentioned above---will not achieve complete fault-tolerance, so we also discuss optimizing to make the rate of logical errors as small as possible. In Section~\ref{sec:circuit}, we develop a systematic way to construct and search for circuits with small logical error. As examples, we apply our scheme to find optimal or near-optimal encoding circuits for logical single- and multi-qubit rotations in the $[[n, n-2, 2]]$ code, the five-qubit perfect code, the Steane code and the $[[15,7,3]]$ Hamming code in Sections~\ref{sec:qedc}, \ref{sec:5qcode}, \ref{sec:7qcode} and \ref{sec:15qcode}, respectively. We compare the performance of the resulting circuits to unencoded operations. Under the noise levels of current devices, our encoding scheme is 4--7 times less noisy, and at most $3\%$ of runs need to be discarded. The improvement grows to 10--30 times if the physical noise level of a single-qubit rotation decreases to $10^{-4}$. Finally, in Section~\ref{sec:discussion} we discuss the implications of our work.

\section{Encoding the Exponential Map}\label{sec:encode}
Encoding non-Clifford gates into a given $[[n, k, d]]$ stabilizer code with Clifford+T universal gate set and magic-state distillation requires a large amount of resources and has a large overhead. As a result, it cannot be applied to quantum algorithms in the near future due to the limited resources of current and near-term quantum computers. In this section, we study an encoding scheme specifically for the exponential map $\exp(-\ii \theta P)$, which is not fault-tolerant, but is useful for EFTQC applications and gives some protection from noise.

\subsection{Encoding Scheme and Verification}\label{sub:encode}
An encoding scheme that converts the logical exponential $\overline{U} = \overline{e^{-\ii\theta P}}$ into physical gates for an $[[n, k, d]]$ stabilizer code should satisfy the following two requirements:
\begin{enumerate}
\item The encoded operation $U$ defined on the $n$-qubit codespace is equivalent to the unencoded operation $\overline{U}$ defined on the $k$-qubit logical Hilbert space. In other words, for any logical state $|\overline{\psi}\rangle$ and its physical state $|\psi\rangle$, $U|\psi\rangle$ is the physical state of $\overline{U}|\overline{\psi}\rangle$. 

\item The encoded operator always transforms an encoded state into another state in the codespace. In other words, the encoded operator should commute with all stabilizer generators. 
\end{enumerate}
A general exponential map can be encoded into an arbitrary $[[n, k, d]]$ stabilizer code via
\begin{equation}\label{eq:encode_exp}
    \overline{e^{-\ii\theta P}} \xrightarrow{\text{ encoded into }} \exp(-\ii{\theta}P),
\end{equation}
where $\overline{e^{-\ii\theta P}} = \cos \theta I - \ii\sin\theta \overline{P} = e^{-\ii\theta\overline{P}}$ and $P$ is a physical representative of the logical Pauli $\overline{P}$.

To examine whether the construction in Eq.~\eqref{eq:encode_exp} meets the first requirement, consider an arbitrary logical state $|\overline{\psi}\rangle$ and its physical state $|\psi\rangle$. We have 
\begin{align} 
    e^{-\ii\theta\overline{P}}|\overline{\psi}\rangle =& \cos\theta|\overline{\psi}\rangle - \ii\sin\theta\overline{P}|\overline{\psi}\rangle \label{eq:prop1_eq1}, \\
    e^{-\ii\theta P}|\psi\rangle =& \cos\theta|\psi\rangle - \ii\sin\theta P|\psi\rangle \label{eq:prop1_eq2}.
\end{align}
Since the operator $P$ is the physical operator of $\overline{P}$, the state $P|\psi\rangle$ is the physical state of $\overline{P}|\overline{\psi}\rangle$. Therefore, $e^{-\ii\theta P} |\psi\rangle$ is the encoded state of $e^{-\ii\theta\overline{P}} |\overline{\psi}\rangle$.

The second requirement is also satisfied by the construction in Eq.~\eqref{eq:encode_exp}. Given that $P$ is a physical operator of $\overline{P}$, it holds for any stabilizer $S_j$ that
\begin{equation}
    \left[S_j , P \right]  = 0\iff S_j P = P S_j \iff S_j P S_{j}^{\dagger} = P.
\end{equation}
Since $Ve^{cA}V^{\dagger} = e^{cVAV^{\dagger}}$ for any Hermitian $A$, unitary $V$ and constant $c\in\mathbb{C}$, we have  
\begin{equation}
    S_je^{-i\theta P} S_j^{\dagger} = e^{-i\theta  S_j PS_j^{\dagger} } = e^{-i\theta P},
\end{equation}
and thus $S_je^{-i\theta P} =  e^{-i\theta P}S_j$. A stabilizer generator $S_j$ for the input state $|\psi\rangle$ is still a stabilizer generator for the output state $U|\psi\rangle$, because  $S_j U|\psi\rangle = US_j|\psi\rangle = U|\psi\rangle$. Thus, the input and output states are in the same codespace.

\begin{figure}
    \centering
    \includegraphics[width=0.45\textwidth]{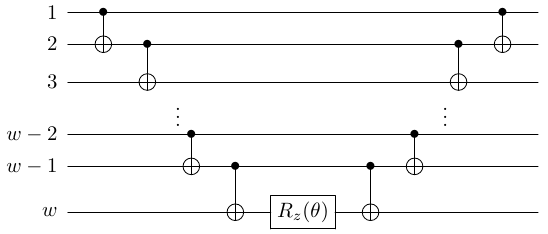}
    \caption{\label{fig:staircase} CNOT ladder implementation of $\exp(-\ii\theta Z^{\otimes w}/2)$. }
\end{figure}

\subsection{Optimizing the Circuit}\label{sub:optimal}
Let us consider a physical circuit implementing the exponential map with a total of $N$ noisy gates. The probability that a logical error occurs is given by the sum over $m$ of the $m$-th order logical error terms, where the $m$-th order logical error is
\begin{equation}\label{eq:m_order}
    {\rm Pr}(m\text{ faults})\,{\rm Pr}(\text{logical error}|m\text{ faults}).
\end{equation}
Here, a fault is a physical error arising from a single faulty gate. A faulty gate can be modeled as
\begin{equation}
\widetilde{\mathcal{U}} = \mathcal{N} \circ \mathcal{U} ,
\end{equation}
where $\mathcal{U}(\rho) = U\rho U^{\dagger}$ is a noise-free unitary channel and $\mathcal{N}$ is the noise channel associated with the gate. The logical error rate given $m$ faults depends on the precise error model. For simplicity, in this paper we will only consider $\mathcal{N}$ to be a Pauli channel for Clifford gates.

A quantum circuit is first-order fault tolerant if and only if ${\rm Pr}(\text{logical error}|m=1\text{ fault}) = 0$. Any circuit that directly implements the encoded operator $e^{-\ii\theta P}$ using Clifford$+R(\theta)$ as a universal gate set is not first-order fault tolerant, since an over- or under-rotation of the encoded operator by an angle $\Delta\theta$ will turn into a logical error $e^{-\ii\Delta \theta P}$. Given this limitation, circuits that minimize the first-order logical error for the encoded exponential map (referred to as optimal circuits) should ensure that logical errors arise only from such imprecise rotations, while all other physical errors are reliably detected by the syndrome measurement process.

\section{Circuit Implementation}\label{sec:circuit}
A widely used realization of an arbitrary exponential operator $\exp(-\ii\theta P)$ in the literature is a circuit of the form $Ue^{-\ii\theta P_1} U^{\dagger}$, where $e^{-\ii\theta P_1}$ is a single-qubit rotation with Pauli operator $P_1$ and $U$ and $U^{\dagger}$ are Clifford operators such that $UP_1 U^{\dagger} = P$. For a weight-$t$ Pauli operator $P$, this type of circuit usually contains $2(t-1)$ two-qubit Clifford gates and no more than $2t$ single-qubit Clifford gates. Examples of such implementations can be seen in Figure~4.19 of Ref.~\cite{nielsen_chuang} and in the CNOT ladder used in many papers (see e.g., Ref.~\cite{first_cnot_ladder} and Figure~\ref{fig:staircase}). This implementation exactly realizes the encoded exponential using Clifford gates plus a single-qubit rotation $R(\theta)$, rather than approximating it using the Clifford+T gate set. But it will not be fault-tolerant in general. In this paper, we will only study the level of logical noise and the construction of optimal or near-optimal encoded circuits with a single-qubit rotation gate for different stabilizer codes. There are other approaches to realize the exponential map, such as the implementation in Ref.~\cite{controll_Ry} using controlled rotation gates, and they may be worth exploring in future works. 

\begin{figure}
    \centering
    \includegraphics[width=0.45\textwidth]{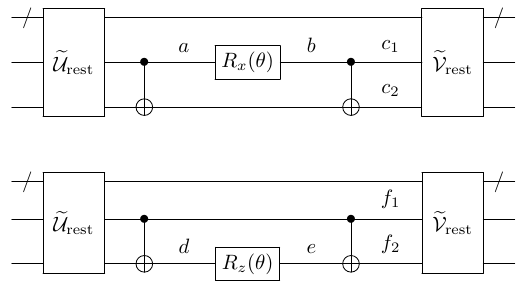}
    \caption{\label{fig:logical_error_location} Upper: an analog $X$ error can arise at location $b$. An $X$ error at location $a$ or $b$ is equivalent to an $X\otimes X$ error at locations $c_1, c_2$, and both will lead to the same logical error. Lower: an analog $Z$ error can arise at location $e$. A $Z$ error at location $d$ or $e$ and is equivalent to a $Z\otimes Z$ error at locations $f_1, f_2$, and both will lead to the same logical error. $\widetilde{\mathcal{U}}_{\rm rest}$ and $ \widetilde{\mathcal{V}}_{\rm rest}$ represent all other noisy Clifford gates in the circuit. }
\end{figure}

\subsection{Error Rate Calculation}\label{sub:error_rate}
In Section~\ref{sub:optimal}, we discuss imprecise rotations as a source of logical noise. An imprecise physical single-qubit rotation gate leads to the under- or over-rotation of the encoded rotation operator---that is, an analog error~\cite{gerhard2024weaklyFT}. Treating the error $\delta\theta$ in the rotation angle as a random variable with expectation $\mathbb{E}[\delta\theta] = 0$ and variance $\mathbb{E}[\delta\theta^2] = \sigma^2$, the analog error $\exp[-\ii (\delta\theta) P_1]$ can be modeled as a simple Pauli error process $P_1$ with error rate $q=\sigma^2$ occurring after the (perfect) single-qubit rotation~\cite{gerhard2024weaklyFT}. 

Suppose that an encoded exponential map $\exp(-\ii \theta P)$ with a weight-$t$ Pauli operator is implemented in the form of $Ue^{-\ii\theta P_1} U^{\dagger}$. If there are a total of $l$ physical Pauli errors across all two-qubit gates that will cause an undetectable logical error, the first-order logical error rate $p_L$ (not conditional on postselection) is approximately given by
\begin{equation}\label{eq:logical_error_rate}
    p_L = q + \frac{l}{15}p.
\end{equation}
In this calculation, we include analog errors (with error rate $q$) for the single-qubit rotation and neglect errors from single-qubit Clifford gates (which should all either be detectable or be equivalent to an error from a neighboring two-qubit gate). We assumed depolarizing noise for all two-qubit gates with error rate $p$. Other types of physical error may contribute to $p_L$ in a real quantum device, but for simplicity we will adopt this noise model for the remainder of this paper.

We observe that there are two physical errors from Clifford gates whose effect is equivalent to an analog error and will therefore also produce a logical error. The first is the physical error $P_1$ from the gate before the single-qubit rotation. This error commutes with the rotation. The second physical Pauli error is $VP_1V^{\dagger}$ from the gate $V$ after the rotation, which also results when error $P_1$ propagates through $V$. Figure~\ref{fig:logical_error_location} gives examples of these two cases as an illustration. From this result, the first-order logical error rate $p_L$ of an encoded exponential operator is at least
\begin{equation}\label{eq:ft_limit}
    p_{L} \geq q + \frac{2}{15}p.
\end{equation}
In the following sections, we show that this lower bound can be achieved by some encoded circuits in the form of $Ue^{-\ii\theta P_1} U^{\dagger}$ with postselection. An error correction process cannot completely fix logical errors from under- or over-rotation, and possible decoding errors may also introduce extra logical errors to the circuit. 

\subsection{Search for Optimal or Near-Optimal Circuits}\label{sub:search}
Circuits with different structures can result in different levels of logical noise. To find circuits with the best performance, one needs to generate a set of circuits that implement the same operator and select those with minimum $p_L$. For a given encoded exponential map $\exp(-\ii\theta P)$ with a weight-$t$ Pauli operator, we developed Algorithm~\ref{algo:candidate} below to generate a set of encoded circuits (defined as the {\it candidate set}) with $n_a$ ancilla qubits. 

\begin{algorithm}\label{algo:candidate}
    \caption{Candidate Set Construction}

    \SetKwInOut{Input}{input}
    \SetKwInOut{Output}{output}

    \SetKwData{BaseCircuits}{base\_circuits}
    \SetKwData{CandidateSet}{candidate\_set}
    \SetKwData{Circuit}{circuit}
    \SetKwFunction{XBase}{XBase}
    \SetKwFunction{YBase}{YBase}
    \SetKwFunction{ZBase}{ZBase}
    \SetKwFunction{AddAncilla}{AddAncilla}
    \SetKwFunction{AddSingleQubitGate}{Add1QGate}

    \Input{Pauli $P$ in encoded operator $\exp(-\ii \theta P)$, number of ancilla $n_a \geq 0$}
    \Output{\CandidateSet}
    \BlankLine

    $t \leftarrow$ weight of $P$\;
    $N \leftarrow t + n_a$ \;
    \BaseCircuits $\leftarrow $ \XBase{$N$}+\YBase{$N$}+\ZBase{$N$} \tcp*{Base circuit built on $N$ qubits with placeholders for $n_a$ ancillas} 
    \For{$U_c \in $ \BaseCircuits}{
        \eIf{$n_a > 0$}{
            $J \leftarrow$ all possible ancilia positions 
            \tcp*{Totally $\binom{N}{n_a}$ choices}
            \For{$j \in J$}
            {
                \Circuit $\leftarrow$ \AddAncilla{$U_c$, $j$} \;
                \Circuit $\leftarrow$ \AddSingleQubitGate{\Circuit, $P$, $j$}\;
                Append \Circuit to \CandidateSet \;
            }
        }{
        \Circuit $\leftarrow$ \AddSingleQubitGate{$U_c$, $P$}\;
        Append \Circuit to \CandidateSet \;
        }
    }
\end{algorithm}

The first step of the above algorithm is to generate a set of $N$-qubit base circuits, where $N = t + n_a$. The $X$-base, $Y$-base, and $Z$-base circuits are in the form $U e^{-\ii\theta P_1} U^{\dagger}$, where the single-qubit rotation gate is $R_X(\theta)$, $R_Y(\theta)$, or $R_Z(\theta)$, respectively. In the second step, we convert the $j$-th qubit in each base circuit into an ancilla qubit, which gives $\binom{N}{n_a}$ circuits for each base circuit. Since ancilla qubits may help detect more errors, this step expands the candidate set by including circuits that may approach or achieve the lower bound in Eq.~\eqref{eq:ft_limit}. Finally, we sandwich the above circuits between suitable single-qubit gates to convert them into candidate circuits for the desired $\exp(-\ii \theta P)$. Details on the construction of the base circuits ($\texttt{XBase}, \texttt{YBase}, \texttt{ZBase}$), the procedure for converting qubits into ancillas ($\texttt{AddAncilla}$) and sandwiching circuits with single-qubit gates ($\texttt{Add1QGate}$) can be found in Appendix~\ref{app:type_i}. 

% The operator $U$ in each base circuit is a sequence of $N-1$ controlled-NOT gates acting on different qubits. a diverse pool

%%%%%%%%%%%%%%%%%%%%%%%%%%%%%%%%%%%%%%%%%%%%%%%%%%%%%%%%%%%%%%%%%%%%%%%%%%%%%%%%

Now we can evaluate the logical error rate for each candidate and identify those with the minimum value of $l$ in Eq.~\eqref{eq:logical_error_rate} using Algorithm~\ref{algo:search} below.

\begin{algorithm}\label{algo:search}
    \caption{Best circuits in Candidate Set}

    \SetKwInOut{Input}{input}
    \SetKwInOut{Output}{output}

    \SetKwData{CandidateSet}{candidate\_set}
    \SetKwData{BestCircuit}{best\_circuit}
    \SetKwFunction{LogicalError}{LogicalError}
    
    \Input{\CandidateSet, stabilizer generators $\{S_j\}$ for $[[n,k,d]]$ code, ancilla generators $\{S^a_j\}$}
    \Output{\BestCircuit, i.e., circuits with minimum $l$}
    \BlankLine

    $l\leftarrow \infty$\;
    \For{$U_c \in $ \CandidateSet}{
        $l' \leftarrow $ \LogicalError{$U_c, \{S_j\}, \{S^a_j\}$}\;
        \uIf{$l' < l$}{
            $l \leftarrow l'$\;
            Clear \BestCircuit and append $U_c$\;
        }
        \ElseIf{$l' = l$}{
            Append $U_c$ to \BestCircuit \;
        }
    }
\end{algorithm}

In this algorithm, \LogicalError{$U_c, \{S_j\}, \{S^a_j\}$} evaluates $l$ in Eq.~\eqref{eq:logical_error_rate} for each candidate circuit under depolarizing noise. Specifically, it uses a brute-force approach to consider the $15$ possible Pauli errors that can occur after each two-qubit gate in the circuit and count how many of these errors lead to logical errors at the output, which gives the value of $l$. Circuits with the smallest $l$ are selected as the best candidates.

It is important to note that the search process in Algorithm~\ref{algo:search} does not guarantee that the selected circuit is strictly optimal (i.e., achieving $l=2$), as the outcome depends on the choice of the candidate set, the number of ancillas $n_a$, and the properties of the stabilizer code. We refer to circuits with the lowest $l\geq 2$ found among all candidates for a given stabilizer code as near-optimal circuits, and they can still offer significant advantages in practice. In the following sections, we present a detailed analysis of the best circuits for different stabilizer codes identified from the list of candidates.

\section{Encoded Exponential Map in Error Detecting Codes}\label{sec:qedc}
In this section, we mainly focus on the analysis of optimal and near-optimal circuits for encoded exponential operators we found for the $[[n, n-2, 2]]$ quantum error-detecting code (QEDC). This stabilizer code utilizes an even number of physical qubits $n$ to encode $k = n-2$ logical qubits with code distance $2$, and it is defined by two nonlocal Pauli operators, $X^{\otimes n}$ and $Z^{\otimes n}$. We choose a set of logical Pauli operators for logical qubits $j =1, \dotsc, n-2$ as $\overline{X}_j = X_jX_{n-1}$, $\overline{Z}_j = Z_jZ_{n}$ and $\overline{Y}_j = \ii\overline{X}_{j}\overline{Z}_j = Y_j X_{n-1} Z_n$. 

\begin{figure*}
    \centering
    \includegraphics[width=0.8\textwidth]{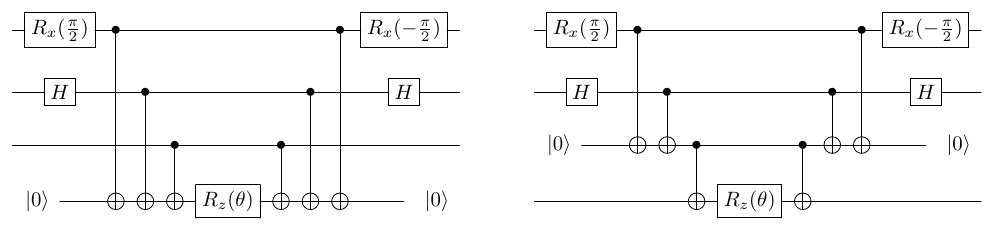}
\caption{\label{fig:logical_ry} Two near-optimal encoded circuits for $\overline{R_{Y_{j}}(\theta)} = \exp(-\ii\theta Y_{j}X_{n-1}Z_{n}/2)$ in the $[[n, n-2, 2]]$ code. For clarity, only the relevant qubits are shown in the circuit diagrams.}
\end{figure*}

\subsection{Single- and Two-qubit Rotations}
The logical single-qubit rotations in the $[[n, n-2, 2]]$ code are
\begin{equation}
    \begin{split}
    \overline{R_{X_j}(\theta)} =& \exp(-\ii\theta\overline{X}_j /2) = \exp(-\ii\theta X_jX_{n-1}/2),\\
    \overline{R_{Y_j}(\theta)} =& \exp(-\ii\theta\overline{Y}_j /2) = \exp(-\ii\theta Y_j X_{n-1}Z_n/2), \\
    \overline{R_{Z_j}(\theta)} =& \exp(-\ii\theta\overline{Z}_j /2) = \exp(-\ii\theta Z_jZ_{n}/2).
    \end{split}
\end{equation}
For $\overline{R_{X_j}(\theta)}$ and $\overline{R_{Z_j}(\theta)}$, we first consider circuits with no ancilla qubits and generate a candidate set of $4$ circuits for each operator. Some of these circuits are shown in Ref.~\cite{iceberg}. After running Algorithm~\ref{algo:search}, we find $l = 6$ for all $4$ circuits. We then build up another candidate set using one ancilla qubit in the circuit and search for the best, which returns a candidate set of $30$ circuits where $10$ of them reach the limit $l=2$. Two of the best candidates match the weakly fault-tolerant constructions of Ref.~\cite{gerhard2024weaklyFT}.

For $\overline{R_{Y_j}(\theta)}$, all $10$ circuits with no ancilla in the candidate set have $l=12$, while $14$ of $144$ circuits with one ancilla have $l=4$, which is worse than the $l=2$ limit. We extended our search to circuits with two ancilla qubits, but the best circuits we found have $l=4$. Thus, we consider $14$ circuits with one ancilla with $l=4$ as the near-optimal encoded circuits for $\overline{R_{Y_j}}(\theta)$ under the $[[n, n-2, 2]]$ code. Two of these built on $Z$-base circuits are shown in Figure~\ref{fig:logical_ry}.

We can also construct logical two-qubit rotation gates in the $[[n, n-2, 2]]$ code:
\begin{equation}
    \begin{split}
    \overline{R_{X_iX_j}(\theta)} =& \exp(-\ii\theta X_iX_{j}/2),\\
    \overline{R_{Y_iY_j}(\theta)} =& \exp(-\ii\theta Y_i Y_{j}/2), \\
    \overline{R_{Z_iZ_j}(\theta)} =& \exp(-\ii\theta Z_iZ_{j}/2). 
    \end{split}
\end{equation}
These encoded two-qubit rotations share a similar structure with those for $\overline{R_{X_j}(\theta)}$ and $\overline{R_{Z_j}(\theta)}$. As a result, the optimal circuits for these operations also achieve the $l=2$ limit and have the same circuit structure as the optimal implementations of $\overline{R_{X_j}(\theta)}$ and $\overline{R_{Z_j}(\theta)}$.

It is worth noting that single-qubit encoded rotation does not outperform unencoded rotation, because the logical error rate $p_L$ is always larger than the analog error rate of single-qubit rotation $q$. For two-qubit rotations, on the other hand, implementing an unencoded operator requires $2$ CNOT gates and one rotation gate. Therefore, the error rate of an unencoded two-qubit rotation is given by $p_{U} = 2p+q$, which is always larger than for encoded circuits without ancilla ($l=6$) and optimal encoded circuits with one ancilla ($l=2$). 

%%%%%%%%%%%%%%%%%%%%%%%%%%%%%%%%%%%%%%%%%%%%%%%%%%%%%%%%%%%%%%%

\begin{figure*}
    \centering
    \includegraphics[width=0.7\textwidth]{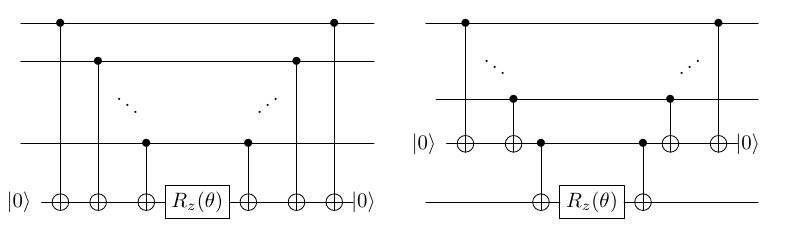}
\caption{\label{fig:multiq_rot_iceberg} Encoded circuits for $\overline{\exp(-\ii\theta Z^{\otimes w}/2)}$ in the $[[n, n-2, 2]]$ code. }
\end{figure*}

\begin{table*}[htp]
    \centering
    \caption{Details of Near-Optimal Circuit Search for Multi-qubit Rotation.}\label{tab:iceberg_multiq}
    \begin{tabular}{c@{\hspace{30pt}}c@{\hspace{30pt}}c@{\hspace{30pt}}c@{\hspace{30pt}}c@{\hspace{30pt}}c}
        \hline\hline
        % \multirow{2}{*}{Weight $w$} & Encoded Operator & No Ancilla & With Ancilla \\
        %   & Encoded Operator & No Ancilla & With Ancilla \\
        \multirow{2}{*}{Weight $w$} & \multirow{2}{*}{Encoded operator} & \multicolumn{2}{c}{Number of candidates} & \multicolumn{2}{c}{$l$ for best circuits} \\
        \cline{3-4}\cline{5-6}
        & & $n_a = 0$ & $n_a = 1$ & $n_a = 0$ & $n_a = 1$ \\
        \hline
        $w = 3,4$ & $\exp(-\ii\theta Z^{\otimes 4}/2)$ & 6    & 120    & 18 & 6\\
        $w = 5,6$ & $\exp(-\ii\theta Z^{\otimes 6}/2)$ & 120  & 5040   & 30 & 10\\
        $w = 7,8$ & $\exp(-\ii\theta Z^{\otimes 8}/2)$ & 5040 & 362880 & 42 & 14\\
        \hline
    \end{tabular}
\end{table*}

\subsection{Multi-qubit Rotations}\label{sub:multiq_iceberg}
We further investigate the level of logical noise for encoded multi-qubit rotations in the $[[n, n-2, 2]]$ code. Without loss of generality, we consider a rotation of a weight-$w$ logical $\overline{Z}$ operator, $\bigotimes_{j=1}^{w}\overline{Z}_j$, where $w$ ranges from $3$ to $8$. To keep the search space manageable, we only use $Z$-base circuits and set the number of ancillas $n_a$ to $0$ or $1$ to build up the candidate set. For these search parameters, we did not find an optimal circuit that reaches the lower limit $l=2$. Instead, we found two near-optimal circuits that have lower $l$ values than the other candidates (see Figure~\ref{fig:multiq_rot_iceberg}). The encoded operator with weight-$t$ Pauli $Z^{\otimes t}$, the number of candidate circuits and the minimum $l$ are reported in Table~\ref{tab:iceberg_multiq}. Note that an arbitrary encoded exponential can be constructed from a $Z$-base circuit by adding appropriate single-qubit gates before and after, and the impact of the physical noise from those single-qubit gates can be ignored, since any single fault will be detectable. Therefore, the rate of logical errors for multi-qubit $Z$-rotations is a good approximation of the rate of logical errors for an arbitrary encoded exponential of the same weight with circuits that have the same number of qubits and ancillas.

To investigate whether these encoded operators outperform unencoded operators, we first calculate the error rate of an unencoded multi-qubit rotation with a weight-$w$ Pauli operator $Z^{\otimes w}$. This requires $2(w-1)$ CNOT gates and one $R_Z(\theta)$ to implement the unencoded operator into the form $Ue^{-\ii\theta P_1}U^{\dagger}$, so the error rate is given by
\begin{equation}\label{eq:unencoded_multiq_error_rate}
    p_{U} = 2(w-1)p + q.
\end{equation} 
For comparison, we plot the error rates $p_U$ and $p_L$ for different multi-qubit rotations in Figure~\ref{fig:multi_iceberg}. Here, we only show the coefficient of $p$ in $p_L$ and $p_U$, which is $l/15$ for encoded circuits and $2(w-1)$ for unencoded circuits. The figure shows a significant suppression of noise when using encoded circuits, and the effect is much greater for larger-weight operators.

\begin{figure}
    \centering
    \includegraphics[width=0.45\textwidth]{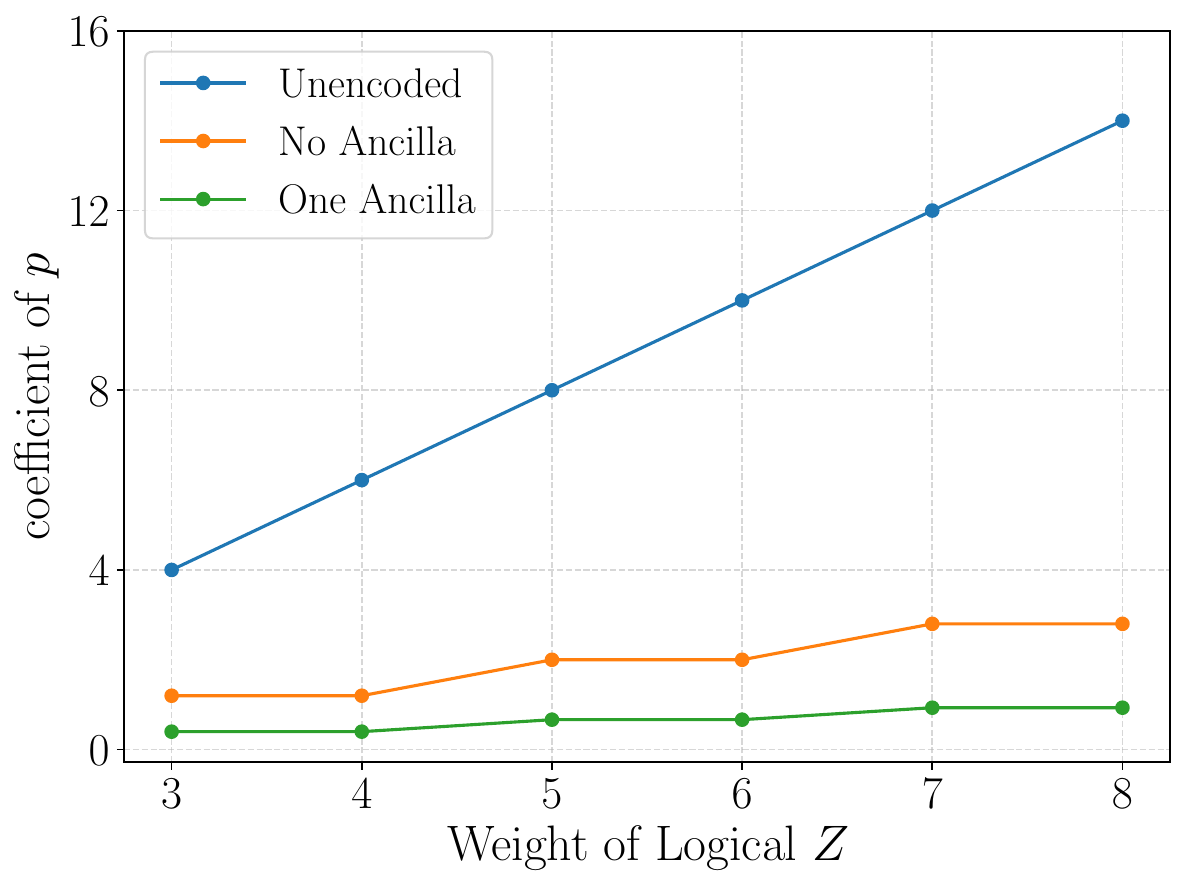}
\caption{\label{fig:multi_iceberg} Leading-order error rate for the near-optimal circuits we found for multi-qubit $Z$-operator rotations in the $[[n, n-2, 2]]$ code. Here, we plot the coefficient of $p$ in $p_L$. For the unencoded case, the $y$-axis shows $2(w-1)$. For encoded circuits, the $y$-axis shows $l/15$.}
\end{figure}

%%%%%%%%%%%%%%%%%%%%%%%%%%%%%%%%%%%%%%%%%%%%%%%%%%%%%%%%%%%%%%%

We note that the $[[n, n-2, 2]]$ code will reject the current run when at least one of the syndromes or the ancilla is measured to be $-1$. In this case, the upper bound of acceptance under postselection is given by
\begin{equation}\label{eq:post_selection}
    p_S \leq 1 - p_1 + p_L = 1 - 2(t+n_a - 1)p + \frac{l}{15}p,
\end{equation}
where $p_1 = 2(t+n_a - 1)p + q$ is the total first-order error rate of the encoded circuit and $n_a$ is the number of ancillas. The unencoded circuit does not discard any runs, so to make a fair comparison between the encoded and unencoded error rates we should consider the logical error rate in accepted data, $p_L / p_S$~\cite{gottesman2016smallFTexp}. For simplicity, we adopt $p = 0.001$ for the two-qubit error rate and calculate $p_U$, $p_S$, $p_L/p_S$ and the ratio between $p_U$ and $p_L/p_S$. Our calculation shows that with the rate of analog errors $q = 0.001$, the error rate for the encoded circuit with one ancilla is 5--7 times lower than the unencoded circuit for $w = 4, 6, 8$. If the analog error rate is further reduced to $q = 0.0001$, the encoded circuit will give 12--13 times less error than the unencoded circuit. These results also indicate better error suppression as $w$ increases. The details of these calculations are presented in Appendix~\ref{app:result}.

\section{\texorpdfstring{Encoded Exponential Map in the $[[5, 1, 3]]$ Code}{Encoded Exponential Map in the [[5, 1, 3]] Code}}\label{sec:5qcode}
\begin{figure*}
    \centering
    \includegraphics[width=0.7\textwidth]{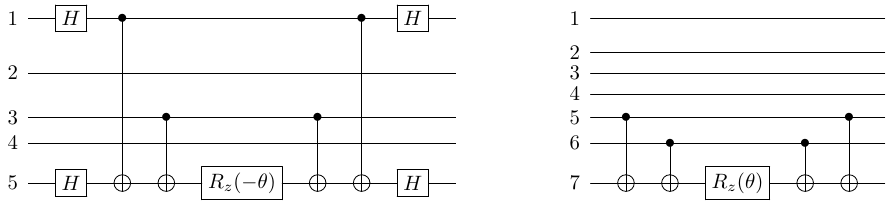}
    \caption{\label{fig:1q_rot_perfect_steane} Encoded circuits for $\overline{R_{Z}(\theta)}$ in the $[[5, 1, 3]]$ code (left) and the $[[7, 1, 3]]$ code (right). }
\end{figure*}

In this section, we study optimal circuits for encoded exponential operators in the $[[5,1,3]]$ perfect code. The stabilizer generators of this code are
\begin{equation}
S = \{ XZZXI, IXZZX, XIXZZ, ZZXIX \}.
\end{equation}
This code has transversal logical Pauli operators $\overline{X} = X^{\otimes 5}$, $\overline{Y} = Y^{\otimes 5}$ and $\overline{Z} = Z^{\otimes 5}$. Multiplying these logical operators by stabilizer generators gives another (minimum-weight) set of logical Pauli operators:
\begin{equation}
    \overline{X} = -YIXIY, \overline{Y} = -ZIYIZ, \overline{Z} = -XIZIX.
\end{equation}

\subsection{Single-Qubit Rotations}
The logical single-qubit rotations in the $[[5, 1, 3]]$ code are
\begin{equation}
    \begin{split}
    \overline{R_{X}(\theta)} =& \exp(\ii\theta Y_1X_3Y_5/2),\\
    \overline{R_{Y}(\theta)} =& \exp(\ii\theta Z_1Y_3Z_5/2), \\
    \overline{R_{Z}(\theta)} =& \exp(\ii\theta X_1Z_3X_5/2).
    \end{split}
\end{equation}
We generated a set of $10$ candidate circuits with no ancilla qubits and found that all of them reach the limit $l=2$. We note that when running Algorithm~\ref{algo:search} we do not include any type of error correction procedure. Instead, we assume only that postselection is done at the end of the circuit to discard runs with detectable errors, just as in the case of the $[[n,n-2,2]]$ codes, and only $l=2$ physical errors from CNOT gates cannot be removed by this process. 

\begin{figure*}
    \centering
    \includegraphics[width=0.7\textwidth]{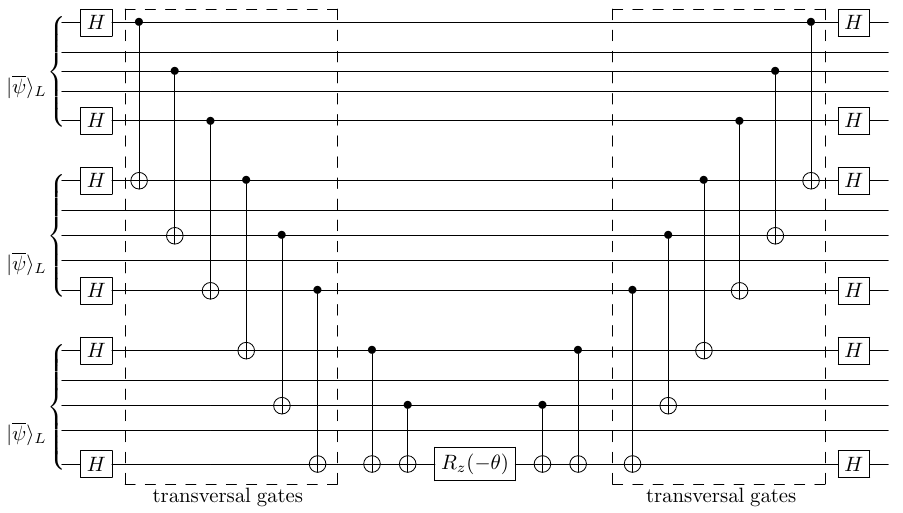}
\caption{\label{fig:multiq_rot_perfect} Encoded circuits for $\overline{\exp(-\ii\theta Z^{\otimes 3}/2)}$ on three blocks of the $[[5, 1, 3]]$ code.}
\end{figure*}

There are two main advantages of using postselection, even with quantum codes that can correct errors. First, it is more practical to realize postselection in a near-term quantum computer, since it does not require the rapid feedback process needed to perform error correction. Second, some physical errors cannot be completely removed by error correction, which in the absence of full fault-tolerance can even introduce logical errors. For example, a $Z_1$ error before the $\overline{R_{X}(\theta)}$ of the $[[5, 1, 3]]$ code will propagate through the encoding circuit and become another error $Z_1 \overline{R_{X}(-2\theta)}$, where only the $Z_1$ part can be corrected while the extra rotation will remain. However, postselection can eliminate both the $Z_1$ error and the extra rotation by rejecting the run when a stabilizer is measured to be $-1$. We will therefore only analyze postselection in this section.

\subsection{Multi-Qubit Rotations}\label{sub:multiq_d3}
We now investigate multi-qubit rotations on several blocks of logical qubits protected by the $[[5,1,3]]$ code. Consider a rotation of weight-$w$ logical $\overline{Z}^{\otimes w}$ again, where
\begin{equation}
    \overline{Z}^{\otimes w} = \bigotimes_{i=1}^{w} (-1)^{w}X_{5i-4}Z_{5i-2}X_{5i}.
\end{equation}
We can construct specific circuits with transversal CNOT gates to implement the multi-qubit rotation. An example of a $w=3$ circuit is shown in Figure~\ref{fig:multiq_rot_perfect}. It can be easily verified that this type of implementation has $l = 2$ under postselection after the entire circuit: the transversal parts on both sides of the circuit are naturally first-order fault-tolerant, and logical (analog) errors only come from the circuit between these parts, which is very similar to a single-qubit rotation on the last qubit with $l=2$. Note that the transversal physical CNOTs in Figure~\ref{fig:multiq_rot_perfect} are not valid logical gates for the 5-qubit code, but the circuit as a whole is a valid logical multi-qubit rotation and preserves the codespace according to Section~\ref{sub:encode}.

To understand the performance of the encoded circuit, we calculate the error rates as in Section~\ref{sub:multiq_iceberg} with $p=0.001$. The result shows a reduction of 2--7 times for $q = 0.001$ to as much as 33 times for $q=0.0001$. We also observe greater improvement when $w$ increases, similar to the results in Section~\ref{sub:multiq_iceberg}. Our simple calculation also reduces concerns about the postselection rate, since fewer than $3\%$ of runs are discarded in our sample at the physical noise level of current devices. We refer readers to Appendix~\ref{app:result} for more details.

We can also compare our construction to a different encoding scheme for the exponential operation. Ref.~\cite{5_1_3_rotation} proposed a transversal Hadamard gate for $\overline{H}$ and a ``pieceably fault-tolerant'' $\overline{\rm  CZ}$ gate implementation. Here, the latter consists of two subcircuits (``pieces'') with an intermediate error correction between the two pieces, which is denoted by $\overline{\rm  CZ} = C_b\cdot {\rm EC}\cdot C_a$. If a pieceably fault-tolerant CNOT gate can be built from $\overline{\rm CZ}$ and $\overline{H}$ gates, the encoded operator $\overline{\exp(-\ii\theta Z^{\otimes w})}$ can be implemented by logical $\overline{\rm CNOT}$ gates between neighboring blocks of qubits, together with an encoded single-qubit rotation on the last logical qubit. In this case, the resulting circuit will be more complicated but still have $l=2$. Therefore, our encoding scheme offers a simple but effective way to implement the logical operator.

\section{\texorpdfstring{Encoded Exponential Map in the $[[7, 1, 3]]$ Code}{Encoded Exponential Map in the [[7, 1, 3]] Code}}\label{sec:7qcode}
\begin{figure*}
    \centering
    \includegraphics[width=0.7\textwidth]{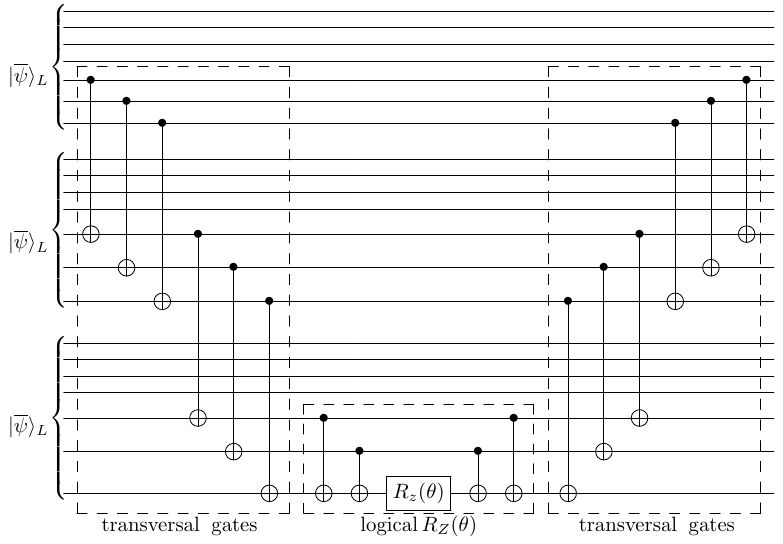}
\caption{\label{fig:multiq_rot_steane} Encoded circuits for $\overline{\exp(-\ii\theta Z^{\otimes 3}/2)}$ under three blocks of the $[[7, 1, 3]]$ code. }
\end{figure*}

This section mainly focuses on finding optimized encoded exponential operators for the $[[7,1,3]]$ code. The stabilizer generators of this code are given by
\begin{equation}
    \begin{split}
    S = \{ &X_1X_2X_3X_4, X_2X_3X_5X_6, X_3X_4X_6X_7, \\
           &Z_1Z_2Z_3Z_4, Z_2Z_3Z_5Z_6, Z_3Z_4Z_6Z_7\}.
\end{split}
\end{equation}

This code has logical Pauli operators, $\overline{X} = X^{\otimes 7}$, $\overline{Y} = Y^{\otimes 7}$ and $\overline{Z} = Z^{\otimes 7}$. Multiplying these logical operators by stabilizer generators gives equivalent weight-3 operators:
\begin{equation}
    \overline{X} = IIIIXXX, \overline{Y} =IIIIYYY, \overline{Z} = IIIIZZZ.
\end{equation}
The logical single-qubit rotations in the $[[7, 1, 3]]$ code are
\begin{equation}
    \begin{split}
    \overline{R_{X}(\theta)} =& \exp(-\ii\theta X_5X_6X_7/2),\\
    \overline{R_{Y}(\theta)} =& \exp(-\ii\theta Y_5Y_6Y_7/2), \\
    \overline{R_{Z}(\theta)} =& \exp(-\ii\theta Z_5Z_6Z_7/2).
    \end{split}
\end{equation}
We created a candidate set of $10$ circuits with no ancilla qubits for these logical operations, and find that all of them are optimal with $l=2$ under postselection.

For multi-qubit rotations, we adopt the same strategy as in the previous section and construct a circuit from transversal gates in Figure~\ref{fig:multiq_rot_steane}. This circuit gives $l=2$ under postselection. The resulting error rate and acceptance rate are the same as for the $[[5,1,3]]$ code in Section~\ref{sec:5qcode} (see Tables~\ref{tab:perfect_multiq_comparison_1} and \ref{tab:perfect_multiq_comparison_2} for details).

Note that, unlike the $[[5,1,3]]$ code, the Steane code has transversal CNOT gates, so circuits built with transversal CNOT gates are equivalent to using logical $\overline{\rm CNOT}$ and single-qubit rotations to produce an encoded version of the ``staircase'' CNOT ladders shown in Figure~\ref{fig:staircase}.

\section{\texorpdfstring{Encoded Exponential Map in the $[[15, 7, 3]]$ Code}{Encoded Exponential Map in the [[15, 7, 3]] Code}}\label{sec:15qcode}
The $[[15,7,3]]$ Hamming code, from the same code family as the $[[7,1,3]]$ code in the previous section, has four $X$ and four $Z$ stabilizers. Both sets of stabilizers use the following parity-check matrix~\cite{Chao2018}:
\begin{equation}
\left(
\begin{array}{c c c c c c c c c c c c c c c}
0&0&0&0&0&0&0&1&1&1&1&1&1&1&1\\
0&0&0&1&1&1&1&0&0&0&0&1&1&1&1\\
0&1&1&0&0&1&1&0&0&1&1&0&0&1&1\\
1&0&1&0&1&0&1&0&1&0&1&0&1&0&1
\end{array}\right) .
\end{equation}
Here, the leftmost column represents the first physical qubit and the rightmost column is for the $15$th qubit. For example, the first $X$-stabilizer is given by $S_1 = X_8X_9X_{10}X_{11}X_{12}X_{13}X_{14}$, and the first $Z$-stabilizer is given by $S_5 = Z_8Z_9Z_{10}Z_{11}Z_{12}Z_{13}Z_{14}$. Similarly, the logical Pauli operators can be defined based on the following seven weight-five binary strings~\cite{Chao2018}:
\begin{equation}
\left(
\begin{array}{c c c c c c c c c c c c c c c}
1&1&0&1&0&0&0&1&0&0&0&0&0&0&1\\
1&1&0&0&1&0&0&0&0&1&0&1&0&0&0\\
1&1&0&0&0&1&0&0&0&0&1&0&0&1&0\\
1&1&0&0&0&0&1&0&1&0&0&0&1&0&0\\
1&0&0&1&0&1&0&0&1&1&0&0&0&0&0\\
1&0&0&1&0&0&1&0&0&0&0&1&0&1&0\\
1&0&0&0&0&0&0&1&0&1&0&0&1&1&0
\end{array}\right) .
\end{equation}
For example, the first row gives $\overline{X}_1 = X_1X_2X_4X_8X_{15}$ and $ \overline{Z}_1 = Z_1Z_2Z_4Z_8Z_{15}$. The remaining strings specify the logical operators $\overline{X}_2, \overline{Z}_2$ through $\overline{X}_7, \overline{Z}_7$. Since the Hamming code has distance three, there exist weight-3 representations of the logical Pauli operators. However, for simplicity we will stick to the above definition of the single-qubit logical Pauli operators to construct single- or multi-qubit logical rotations.

Some examples of logical single-qubit rotations in the $[[15, 7, 3]]$ code are
\begin{equation}
    \begin{split}
    \overline{R_{X_1}(\theta)} =& \exp(-\ii\theta X_1X_2X_4X_8X_{15}/2),\\
    \overline{R_{Y_1}(\theta)} =& \exp(-\ii\theta Y_1Y_2Y_4Y_8Y_{15}/2), \\
    \overline{R_{Z_1}(\theta)} =& \exp(-\ii\theta Z_1Z_2Z_4Z_8Z_{15}/2).
    \end{split}
\end{equation}
These rotations apply to the first logical qubit. For $\overline{R_{Z_1}(\theta)}$, we searched through $24$ $Z$-based circuits and found that all of the circuits give the value $l=2$.

For multi-qubit logical rotations in the same block of $15$ physical qubits, we consider the following eight cases with different weights in the encoded operators:
\begin{equation}
    \begin{split}
        \text{weight-}4: &\overline{Z}_4\overline{Z}_5\overline{Z}_6\overline{Z}_7 = Z_{2,6,8,12} . \\
        \text{weight-}6: &\overline{Z}_1\overline{Z}_2 = Z_{4,5,8,10,12,15} .\\
        \text{weight-}7: &\overline{Z}_5\overline{Z}_6\overline{Z}_7 = Z_{1, 6, 7, 8, 9, 12, 13} .\\
        \text{weight-}8: &\overline{Z}_2\overline{Z}_3\overline{Z}_4\overline{Z}_5 = Z_{2, 4, 5, 7, 11, 12, 13, 14} .\\
        \text{weight-}9: &\overline{Z}_1\overline{Z}_2\overline{Z}_3\overline{Z}_4\overline{Z}_5 = Z_{1, 5, 7, 8, 11, 12,13, 14, 15} .\\
        \text{weight-}10: &\overline{Z}_1\overline{Z}_2\overline{Z}_3\overline{Z}_5\overline{Z}_6\overline{Z}_7 = Z_{2, 4, 5, 7, 9, 10, 11, 13, 14, 15} .\\
        \text{weight-}11: &\overline{Z}_1\overline{Z}_2\overline{Z}_3 = Z_{1, 2, 4, 5, 6, 8, 10, 11,12,14,15} .\\
        \text{weight-}12: &\overline{Z}_1\overline{Z}_2\overline{Z}_3\overline{Z}_4 = Z_{4, 5, 6, 7, 8, 9, 10,11,12,13,14,15} .
    \end{split}
\end{equation}
For compactness, we use here a shorthand notation for a product of Pauli operators: $Z_{i_1,\ldots,i_k} \equiv Z_{i_1} Z_{i_2} \cdots Z_{i_k}$. For all of these logical rotations with weight less than $12$, if we implement them with the staircase circuit shown in Fig.~\ref{fig:staircase}, they have $l=2$. For the weight-$12$ circuit, only a circuit similar to Fig.~\ref{fig:logical_ry} with no ancilla qubits reaches $l=2$. The error and acceptance rate calculations are given in Appendix~\ref{app:result}.

\section{Discussion}\label{sec:discussion}
In this paper, we have proposed a systematic encoding scheme for exponential operators that does not rely on decomposition into the Clifford+T gate set. As a proof of concept, we studied the $[[n, n-2, 2]]$, $[[5,1,3]]$, $[[7,1,3]]$, and $[[15,7,3]]$ codes and constructed encoding circuits for the exponential operator that minimize the logical error rate after postselection. While our encoding scheme cannot achieve first-order fault tolerance, our analysis suggests that its simplicity and the effectiveness in error reduction still offer practical advantages for protecting quantum algorithms from noise in the early fault-tolerant quantum computing (EFTQC) era.

We emphasize the importance of postselection when using our encoding scheme, even when the circuit is encoded into an error-correcting code. Postselection eliminates errors by discarding runs that are flagged by syndrome measurements to have errors in the circuit. This process is particularly effective in the context of exponential operators, where the error-correction procedure can induce extra logical rotations as errors that cannot easily be handled by error-correction process. It would be interesting to compare these results using postselection to results where errors are corrected by classical postprocessing; this comparison will wait for future work.

It is important to point out that under our noise model and using postselection, encoded multi-qubit rotations perform much better than unencoded circuits, and the improvement is more significant as the number of qubits increases. Moreover, multi-qubit rotation can be encoded using transversal gates between blocks of qubits, even if the transversal gate sequence itself is not a valid logical gate. 

We believe that magic state distillation remains a promising method to fault-tolerantly implement non-Clifford gates, and its potential application on quantum chemistry was discussed in Ref.~\cite{trout2015magic}. The use of magic states or other fault-tolerant methods will be necessary for full scalability of fault-tolerant quantum computing. However, our approach offers a route towards application on EFTQC algorithms in the near term due to its simplicity, compact circuits, and lack of real-time feedback. A detailed analysis of our scheme for specific quantum problems, both with simple postselection at the end and with mid-circuit syndrome measurements, will be done in the future to evaluate the ability of near-term encoding schemes to achieve quantum advantage.

\section*{Acknowledgment}
DZ and TAB thank Prithviraj Prabhu, Christopher Gerhard, and Zichang He for fruitful discussions. This work was supported in part by NSF Grants 1911089 and 2316713, and by the U.S. Army Research Office under contract number W911NF2310255.

\bibliography{reference}

\newpage
\onecolumngrid
\appendix
\setcounter{equation}{0}

% \section{Exponential Map in Quantum Algorithm}\label{app:exp}

\section{Constructing encoded circuits}\label{app:type_i}
To search for optimal circuits, we need to generate a set of candidate circuits that implement the same encoded exponential map $Ue^{-i\theta P_1}U^{\dagger}$ , and select those with the lowest $l$. In this section, we provide three circuit tricks that can be used to create candidate circuits. 

\subsection{Recursive Construction of Base Circuits}
A circuit for $\exp(-\ii \theta P')$ can be built by sandwiching a circuit for $\exp(-\ii\theta P)$ with controlled-NOT gates (as shown below), where $P$ is a weight-$t$ Pauli operator and $P'$ is a weight-$(t+1)$ Pauli operator. 

\begin{prop}[Sandwich by CNOTs] \label{prop:cnot_pair_general}
    A circuit for $\exp(-\ii \theta P')$ can be constructed from either of the following two structures,
    \begin{center}
        \includegraphics[width=0.6\textwidth]{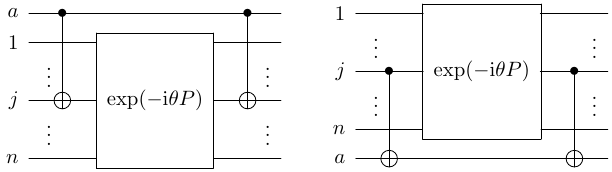}
    \end{center}
    if $P'$ and $P$ are related by
    \begin{equation}
        P'_{aj} = {\rm CNOT}_{aj}(I_a\otimes P_{j}){\rm CNOT}_{aj} \;\;\;\;\text{(using the left circuit),}
    \end{equation}
    or
    \begin{equation}
        P'_{ja} = {\rm CNOT}_{ja}(P_{j}\otimes I_a){\rm CNOT}_{ja}\;\;\;\;\text{(using the right circuit).}
    \end{equation}
    Here, $P_{j}$ is the component of the weight-$t$ Pauli operator $P$ on qubit $j$, and $P'_{aj} = P'_a \otimes P'_j$ is the component of weight-$(t+1)$ Pauli operator $P'$ on qubits $a$ and $j$. $P$ and $P'$ are identical on all the other qubits.
\end{prop}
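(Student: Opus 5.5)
The plan is to use the conjugation identity already used in Section~\ref{sub:encode}, namely $Ve^{cA}V^{\dagger}=e^{cVAV^{\dagger}}$ for unitary $V$. The relevant unitary here is the self-inverse Clifford $V=\mathrm{CNOT}$. Each structure in the figure consists of a circuit implementing $e^{-\ii\theta P}$, sandwiched between two identical CNOT gates that couple the new qubit $a$ to qubit $j$. In the left circuit qubit $a$ is the control; in the right circuit qubit $j$ is the control. Before $P$ is extended, it acts as the identity on qubit $a$, so as an operator on the enlarged register we write it as $I_a\otimes P$.

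The key steps are as follows. First, since $\mathrm{CNOT}^{\dagger}=\mathrm{CNOT}$, the sandwiched circuit equals
\begin{equation*}
\mathrm{CNOT}_{aj}\,e^{-\ii\theta (I_a\otimes P)}\,\mathrm{CNOT}_{aj}
= \exp\!\left(-\ii\theta\,\mathrm{CNOT}_{aj}(I_a\otimes P)\mathrm{CNOT}_{aj}\right),
\end{equation*}
and the analogous expression with $\mathrm{CNOT}_{ja}$ holds for the right circuit. Second, we factor $P=P_j\otimes P_{\rm rest}$, where $P_{\rm rest}$ acts on the qubits other than $j$ and $a$. The CNOT acts trivially on those qubits, so the conjugated operator is
\begin{equation*}
\left[\mathrm{CNOT}_{aj}(I_a\otimes P_j)\mathrm{CNOT}_{aj}\right]\otimes P_{\rm rest} = P'_{aj}\otimes P_{\rm rest} = P',
\end{equation*}
using the hypothesis of the proposition. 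Hence the circuit implements $\exp(-\ii\theta P')$, and the right circuit follows identically. Third, we check that $P'$ is a Pauli operator of weight $t+1$. This follows because Clifford conjugation maps Paulis to Paulis.

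The step that needs care is the weight claim. The statement itself simply assumes that $P'_{aj}$ is the conjugate with $P'_a\neq I$. Any weight-$(t+1)$ target with the prescribed two-qubit component should then be reachable by a suitable choice of $P_j$ and of control/target orientation. To confirm this, I would tabulate the standard CNOT propagation rules: $X$ on the control spreads to $XX$, $Z$ on the target spreads to $ZZ$, and $X$ on the target or $Z$ on the control stays put. It follows that the left structure, with qubit $a$ as control, turns $P_j=Z$ (or $Y$) into a Pauli supported on both qubits, with $P'_a=Z$. Likewise the right structure, with qubit $j$ as control, turns $P_j=X$ (or $Y$) into a Pauli with $P'_a=X$. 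These cases confirm the weight increase, while the operator identity itself needs only the conjugation argument. Conjugating by single-qubit Cliffords on $a$ would then cover $P'_a=Y$; I expect this to be treated separately by the \texttt{Add1QGate} step rather than inside this proposition.
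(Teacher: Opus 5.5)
Your proof is correct and is essentially the paper's: the paper expands $e^{-i\theta P}=\cos\theta\, I - i\sin\theta\, P$ and conjugates each term by the self-inverse CNOT to get $\cos\theta\, I - i\sin\theta\, P'$, which is the same computation as your use of $Ve^{cA}V^{\dagger}=e^{cVAV^{\dagger}}$ with the factorization $P=P_j\otimes P_{\rm rest}$. Your case analysis of the weight is correct but not needed, because the proposition takes the weight-$(t+1)$ form of $P'$ as a hypothesis rather than something to prove.
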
 
\begin{proof}
    The construction on the left can be written as  
    \begin{equation}\label{eq:general}
        U = {\rm CNOT}_{aj}[I_a\otimes \exp(-\ii\theta P)]{\rm CNOT}_{aj}.
    \end{equation}
    Here we ignore identities on other qubits for simplicity. Expanding this equation, we have
    \begin{equation}\label{eq:expand}
        U = {\rm CNOT}_{aj}[I_a\otimes \cos\theta I^{\otimes n}]{\rm CNOT}_{aj} - {\rm CNOT}_{aj}[I_a\otimes \ii\sin\theta \,P]{\rm CNOT}_{aj}.
    \end{equation}
    Notice that ${P}'_{aj} = {\rm CNOT}_{aj}[I_a\otimes P_{j}]{\rm CNOT}_{aj}$. For Eq.~\eqref{eq:expand} we have
    \begin{equation}
        U = \cos\theta[I_a\otimes  I^{\otimes n}] - i\sin\theta P' = \exp(-\ii\theta P').
    \end{equation}
    The construction on the right can also be proved by a similar procedure.
\end{proof}

For example, the circuit of $\exp(-\ii\theta Z^{\otimes t}/2)$ can be created from $\exp(-\ii\theta Z^{\otimes t-1}/2)$ using the left construction, while the circuit of $\exp(-\ii\theta X^{\otimes t}/2)$ can be created from $\exp(-\ii\theta X^{\otimes t-1}/2)$ using the right construction. By adding CNOT pairs and applying this trick recursively, one can build basic operations $\exp(-\ii\theta Z^{\otimes t}/2)$ and $\exp(-\ii\theta X^{\otimes t}/2)$ for any $t$ with one single-qubit rotation $R_Z(\theta)$ or $R_X(\theta)$, respectively. A different CNOT position will give a different circuit structure. We build base circuits $\texttt{XBase}, \texttt{YBase}, \texttt{ZBase}$ using above trick in Algorithm~\ref{algo:candidate} to generate candidate circuits for the encoded exponential map.  

% The operator $U$ in each base circuit is a sequence of $N-1$ controlled-NOT gates acting on different qubits. Second, we employ a \textbf{recursive construction} technique. This method builds a circuit for a weight-$(t+1)$ Pauli operator, $P'$, from a circuit for a weight-$t$ operator, $P$, by adding a pair of CNOT gates. By recursively applying this technique, we can construct circuits for basic operations of any weight, such as $\exp(-\ii\theta Z^{\otimes t})$ and $\exp(-\ii\theta X^{\otimes t})$. The placement of the CNOT pairs creates different circuit structures, expanding our set of candidates.

\subsection{Adding Ancillas}
Adding an ancilla should not change the exponential operator that a circuit implements, but it may reduce the level of logical error. The following circuit trick shows how to add an ancilla to a circuit for $\exp(-\ii\theta P)$ from the circuit of $\exp(-\ii\theta P')$.
\begin{prop}[Adding Ancilla] \label{prop:extra_ancilla}
    A weight-$(t+1)$ Pauli operator $P'$ can be written as
    \begin{equation}
        P'= P_{1} \otimes \cdots \otimes P_{j-1}\otimes P_{j}\otimes P_{j+1}\otimes \cdots \otimes P_{t+1}.
    \end{equation}
    If the weight-$t$ Pauli operator $P$ is written as 
    \begin{equation}
        P = P_{1} \otimes \cdots \otimes P_{j-1}\otimes P_{j+1}\otimes \cdots \otimes P_{t+1},
    \end{equation}
    then by setting the initial state of the $j$-th qubit as the $+1$ eigenstate of $P_j$ and postselecting the output to be $+1$ when measuring $P_j$ at the end, the circuit implementing $\exp(-\ii\theta P')$ becomes a circuit implementing $\exp(-\ii\theta P)$ with an ancilla.  
\end{prop}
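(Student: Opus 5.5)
We argue directly from the series expansion, as in the proof of Proposition~\ref{prop:cnot_pair_general}. Since $P'^2 = I$, we have $\exp(-\ii\theta P') = \cos\theta\, I - \ii\sin\theta\, P'$. Reordering tensor factors so that qubit $j$ is singled out, we write $P' = P_j \otimes P$, where $P$ acts on the remaining $t$ qubits exactly as in the statement. Then
\begin{equation}
    \exp(-\ii\theta P') = \cos\theta\, I_j\otimes I - \ii\sin\theta\, P_j\otimes P .
\end{equation}
We then apply this operator to an input $|+_{P_j}\rangle_j\otimes|\psi\rangle$, where $|+_{P_j}\rangle$ is the $+1$ eigenstate of $P_j$ and $|\psi\rangle$ is an arbitrary (possibly entangled with a reference) state on the other qubits. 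Using $P_j|+_{P_j}\rangle = |+_{P_j}\rangle$, we obtain
\begin{equation}
    \exp(-\ii\theta P')\left(|+_{P_j}\rangle\otimes|\psi\rangle\right) = |+_{P_j}\rangle\otimes\left(\cos\theta\,|\psi\rangle - \ii\sin\theta\, P|\psi\rangle\right) = |+_{P_j}\rangle\otimes \exp(-\ii\theta P)|\psi\rangle .
\end{equation}
The output is therefore a product state in which the ancilla is returned to $|+_{P_j}\rangle$, so a final $P_j$ measurement yields $+1$ with certainty in the noiseless case. Postselecting on $+1$ and discarding the ancilla leaves exactly $\exp(-\ii\theta P)|\psi\rangle$. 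Because this holds for every $|\psi\rangle$, including states entangled with an external reference, the reduced circuit implements the operator $\exp(-\ii\theta P)$ and not merely its action on particular inputs.

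The only point needing care is that the statement concerns a circuit for $\exp(-\ii\theta P')$, such as a base circuit assembled by repeated use of Proposition~\ref{prop:cnot_pair_general}, rather than the abstract operator. This is harmless: the circuit equals $\exp(-\ii\theta P')$ as a unitary, so the identity above applies to it directly, whatever its internal gate structure.

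The role of postselection is worth stating, since this is where I expect the main subtlety in phrasing. In the ideal circuit the postselection succeeds with probability one, so it is not needed for correctness. It becomes a detection mechanism once faults occur, because any fault that flips the ancilla out of its $P_j$ eigenstate is caught by the final measurement.
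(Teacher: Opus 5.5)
Your proof is correct and follows essentially the same route as the paper's: expand $\exp(-\ii\theta P')=\cos\theta\, I-\ii\sin\theta\, P'$, act on an input with qubit $j$ prepared in the $+1$ eigenstate of $P_j$ so that $P'$ reduces to $P$ on the remaining qubits, and postselect on the final $P_j$ measurement. Your extra remarks go slightly beyond what the paper states explicitly, and they are valid. You note that the ancilla is returned unchanged, so postselection succeeds with certainty in the noiseless case, and that the identity also holds for inputs entangled with a reference.
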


\begin{proof}
    Expand $\exp(-\ii\theta P')$
    \begin{equation}
    \exp(-\ii\theta P') = \cos\theta I - \ii\sin\theta P'.
    \end{equation}
    If we choose the initial state $|\psi_j\rangle$ of the $j$-th qubit be the $+1$ eigenstate of $P_j$, and the (general) state of all the other qubits is $|\phi\rangle$, then we can rewrite
    \begin{equation}
    \left(\cos\theta I - \ii\sin\theta P' \right) \left( |\phi\rangle\otimes |\psi_j\rangle \right) = 
    \left(\cos\theta I - \ii\sin\theta P \right)|\phi\rangle \otimes |\psi_j\rangle ,
    \end{equation}
    where $P = P_1 \otimes P_2 \otimes \cdots \otimes P_{j-1}\otimes P_{j+1}\otimes\cdots \otimes P_{t+1}$ is a weight-$t$ Pauli operator related to $P'$ by replacing $P_j$ by $I$. We can postselect at the end of the circuit by measuring $P_j$ on qubit $j$ and checking if we get $+1$.
\end{proof}

In principle, one could add multiple ancilla qubits by applying this trick repeatedly, which might further decrease the likelihood of logical errors in the circuit. Adding multiple ancillas using above trick is used for $\texttt{AddAncilla}$ in Algorithm~\ref{algo:candidate} to generate candidate circuits for the encoded exponential map. %In this paper, we consider candidate circuits with at most one ancilla and search for the noise-resilient circuit of encoded exponential from candidates.

% Finally, we consider \textbf{adding an ancilla qubit}. While an ancilla does not alter the implemented exponential operator, it can reduce the logical error rate. This is achieved by preparing the ancilla in a specific state and post-selecting on its measurement outcome. 

\subsection{Adding Single-qubit Gates to Base Circuits}
To build a circuit for $\exp(-\ii\theta P')$ where $P'$ is an arbitrary weight-$t$ Pauli operator, a widely-used trick in the literature is to sandwich the base circuit for $\exp(-\ii\theta Z^{\otimes t})$ between single-qubit gates such as the Hadamard gate $H$ and rotation gates $R_{x}(\pm \pi/2)$ and $R_{z}(\pm \pi/2)$. These single-qubit gates transform $Z^{\otimes t}$ in the exponential into an arbitrary Pauli operator $P'$. We can choose the single-qubit gates based on the following rules:
\begin{itemize}
    \item Use the Hadamard gate to make $X$: $X = HZH$. 
    \item Use $R_{x}(\pm \pi/2)$ to make $Y$: $Y = R_{x}\left(-\pi/2\right)ZR_{x}\left(\pi/2\right)$.
\end{itemize}
We can use similar tricks to make $\exp(-\ii\theta P')$ starting from a circuit for $\exp(-\ii\theta X^{\otimes t})$ and sandwiching between Hadamard or $R_z(\pm\pi/2)$ gates on the different qubits.

This method is used for $\texttt{Add1QGate}$ in Algorithm~\ref{algo:candidate} to generate candidate circuits of encoded exponential map.

\section{Error Rate Calculation Result}\label{app:result}
\begin{table*}[htp]
    \centering
    \caption{Error rates for $\overline{\exp(-\ii\theta Z^{\otimes w}/2)}$ in the $[[n, n-2, 2]]$ code with $q = 0.001$. }\label{tab:iceberg_multiq_comparison_1}
    \begin{tabular}{c@{\hspace{15pt}}c@{\hspace{15pt}}c@{\hspace{15pt}}c@{\hspace{15pt}}c@{\hspace{25pt}}c@{\hspace{25pt}}c@{\hspace{25pt}}c}
        \hline\hline
        % \multirow{2}{*}{Weight $w$} & Encoded Operator & No Ancilla & With Ancilla \\
        %   & Encoded Operator & No Ancilla & With Ancilla \\
        \multirow{2}{*}{Weight} &\multirow{2}{*}{$p_U\,\,(\times 10^{-3})$} & \multicolumn{2}{c}{$p_S$ (\%)} & \multicolumn{2}{c}{$p_L/p_S\,\,(\times 10^{-3})$} & \multicolumn{2}{c}{$p_U / (p_L/p_S)$} \\
        \cline{3-4}\cline{5-6}\cline{7-8}
        & & $n_a = 0$ & $n_a = 1$ & $n_a = 0$ & $n_a = 1$& $n_a = 0$ & $n_a = 1$ \\
        \hline
        $w = t = 4$ & 7  &  99.52 & 99.24 & 2.21 & 1.41 & 3.17 & 4.96\\
        $w = t = 6$ & 11 &  99.20 & 98.87 & 3.02 & 1.69 & 3.64 & 6.53\\
        $w = t = 8$ & 15 &  98.88 & 98.50 & 3.84 & 1.96 & 3.90 & 7.64\\
        \hline
    \end{tabular}
\end{table*}

\begin{table*}[htp]
    \centering
    \caption{Error rates for $\overline{\exp(-\ii\theta Z^{\otimes w}/2)}$ in the $[[n, n-2, 2]]$ code with $q = 0.0001$}\label{tab:iceberg_multiq_comparison_2}
    \begin{tabular}{c@{\hspace{15pt}}c@{\hspace{15pt}}c@{\hspace{15pt}}c@{\hspace{15pt}}c@{\hspace{25pt}}c@{\hspace{25pt}}c@{\hspace{25pt}}c}
        \hline\hline
        % \multirow{2}{*}{Weight $w$} & Encoded Operator & No Ancilla & With Ancilla \\
        %   & Encoded Operator & No Ancilla & With Ancilla \\
        \multirow{2}{*}{Weight} &\multirow{2}{*}{$p_U\,\,(\times 10^{-3})$} & \multicolumn{2}{c}{$p_S$ (\%)} & \multicolumn{2}{c}{$p_L/p_S\,\,(\times 10^{-3})$} & \multicolumn{2}{c}{$p_U / (p_L/p_S)$} \\
        \cline{3-4}\cline{5-6}\cline{7-8}
        & & $n_a = 0$ & $n_a = 1$ & $n_a = 0$ & $n_a = 1$& $n_a = 0$ & $n_a = 1$ \\
        \hline
        $w = t = 4$ & 6.1  &  99.52 & 99.24 & 1.31 & 0.50 & 4.67 & 12.11\\
        $w = t = 6$ & 10.1 &  99.20 & 98.87 & 2.12 & 0.78 & 4.77 & 13.02\\
        $w = t = 8$ & 14.1 &  98.88 & 98.50 & 2.93 & 1.05 & 4.81 & 13.44\\
        \hline
    \end{tabular}
\end{table*}

\begin{table*}[htp]
    \centering
    \caption{Error rates for $\overline{\exp(-\ii\theta Z^{\otimes w}/2)}$ in the $[[5, 1, 3]]$ or the $[[7, 1, 3]]$ code with $q = 0.001$}\label{tab:perfect_multiq_comparison_1}
    \begin{tabular}{c@{\hspace{15pt}}c@{\hspace{15pt}}c@{\hspace{15pt}}c@{\hspace{15pt}}c}
        \hline\hline
        Weight & $p_U\,\,(\times 10^{-3})$ & $p_S$ (\%) & $p_L/p_S\,\,(\times 10^{-3})$ & $p_U / (p_L/p_S)$ \\
        \hline
        $w = 2$ & 3 &  99.01 & 1.14 & 2.62\\
        $w = 3$ & 5 &  98.41 & 1.15 & 4.34\\
        $w = 4$ & 7 &  97.81 & 1.16 & 6.04\\
        $w = 5$ & 9 &  97.21 & 1.17 & 7.72\\
        \hline
    \end{tabular}
\end{table*}

\begin{table*}[htp]
    \centering
    \caption{Error rates for $\overline{\exp(-\ii\theta Z^{\otimes w}/2)}$ in the $[[5, 1, 3]]$ and $[[7, 1, 3]]$ codes with $q = 0.0001$}\label{tab:perfect_multiq_comparison_2}
    \begin{tabular}{c@{\hspace{15pt}}c@{\hspace{15pt}}c@{\hspace{15pt}}c@{\hspace{15pt}}c}
        \hline\hline
        Weight & $p_U\,\,(\times 10^{-3})$ & $p_S$ (\%) & $p_L/p_S\,\,(\times 10^{-3})$ & $p_U / (p_L/p_S)$ \\
        \hline
        $w = 2$ & 2.1 & 99.01 & 0.236 & 8.91\\
        $w = 3$ & 4.1 & 98.41 & 0.237 & 17.29\\
        $w = 4$ & 6.1 & 97.81 & 0.239 & 25.57\\
        $w = 5$ & 8.1 & 97.21 & 0.240 & 33.75\\
        \hline
    \end{tabular}
\end{table*}

\begin{table*}[htp]
    \centering
    \caption{Error rates in the $[[15, 7, 3]]$ code with $q = 0.001$}\label{tab:hamming_multiq_comparison_1}
    \begin{tabular}{c@{\hspace{15pt}}c@{\hspace{15pt}}c@{\hspace{15pt}}c@{\hspace{15pt}}c@{\hspace{15pt}}c}
        \hline\hline
        Logical Pauli & Weight & $p_U\,\,(\times 10^{-3})$ & $p_S$ (\%) & $p_L/p_S\,\,(\times 10^{-3})$ & $p_U / (p_L/p_S)$ \\
        \hline
        $\overline{Z}_{4,5,6,7}$ &$w = 4, t = 4$ & 7 & 99.41 & 1.14 & 6.14\\
        $\overline{Z}_{1, 2}$ &$w = 2, t = 6$ & 3 & 99.01 & 1.14 & 2.62\\
        $\overline{Z}_{5, 6, 7}$ &$w = 3, t = 7$ & 5 & 98.81 & 1.15 & 4.36\\
        $\overline{Z}_{2, 3, 4, 5}$ &$w = 4, t = 8$ & 7 & 98.61 & 1.15 & 6.09\\
        $\overline{Z}_{1, 2, 3, 4, 5}$ &$w = 5, t = 9$ & 9 & 98.41 & 1.15 & 7.82\\
        $\overline{Z}_{1, 2, 3, 5, 6, 7}$ &$w = 6, t = 10$ & 11 & 98.21 & 1.15 & 9.53\\
        $\overline{Z}_{1, 2, 3}$ &$w = 3, t = 11$ & 5 & 98.01 & 1.16 & 4.32\\
        $\overline{Z}_{1, 2, 3, 4}$ &$w = 4, t = 12$ & 7 & 97.81 & 1.16 & 6.04\\
        \hline
    \end{tabular}
\end{table*}

\begin{table*}[htp]
    \centering
    \caption{Error rates in the $[[15, 7, 3]]$ code with $q = 0.0001$}\label{tab:hamming_multiq_comparison_2}
    \begin{tabular}{c@{\hspace{15pt}}c@{\hspace{15pt}}c@{\hspace{15pt}}c@{\hspace{15pt}}c@{\hspace{15pt}}c}
        \hline\hline
        Logical Pauli & Weight & $p_U\,\,(\times 10^{-3})$ & $p_S$ (\%) & $p_L/p_S\,\,(\times 10^{-3})$ & $p_U / (p_L/p_S)$ \\
        \hline
        $\overline{Z}_{4,5,6,7}$ &$w = 4, t = 4$ & 6.1 & 99.41 & 0.23 & 25.99\\
        $\overline{Z}_{1, 2}$ &$w = 2, t = 6$ & 2.1 & 99.01 & 0.24 & 8.91\\
        $\overline{Z}_{5, 6, 7}$ &$w = 3, t = 7$ & 4.1 & 98.81 & 0.24 & 17.36\\
        $\overline{Z}_{2, 3, 4, 5}$ &$w = 4, t = 8$ & 6.1 & 98.61 & 0.24 & 25.78\\
        $\overline{Z}_{1, 2, 3, 4, 5}$ &$w = 5, t = 9$ & 8.1 & 98.41 & 0.24 & 34.16\\
        $\overline{Z}_{1, 2, 3, 5, 6, 7}$ &$w = 6, t = 10$ & 10.1 & 98.21 & 0.24 & 42.51\\
        $\overline{Z}_{1, 2, 3}$ &$w = 3, t = 11$ & 4.1 & 98.01 & 0.24 & 17.22\\
        $\overline{Z}_{1, 2, 3, 4}$ &$w = 4, t = 12$ & 6.1 & 97.81 & 0.24 & 25.57\\
        \hline
    \end{tabular}
\end{table*}

% \section{Quantum Code Definition}\label{app:stab_codes}
% \input{appendix/2-codes}

% \section{Type II Circuit}\label{app:type_ii}
% \input{appendix/2-type-ii}

\end{document}